\newenvironment{MyColorPar}[1]{%
	\leavevmode\color{#1}\ignorespaces%
}{%
}%
\newtheorem{theorem}{\bf Theorem}
\newtheorem{remark}{Remark}
\begin{document}
	\title{\huge Sum-Rate Analysis for High Altitude Platform (HAP) Drones with Tethered Balloon Relay}\vspace{0.1cm}


 \author{Sudheesh P. G,
    	Mohammad Mozaffari,~\IEEEmembership{Student Member,~IEEE,}
    	Maurizio Magarini,~\IEEEmembership{Member,~IEEE,}
    	Walid Saad,~\IEEEmembership{Senior Member,~IEEE,}
    	and P. Muthuchidambaranathan
\thanks{Sudheesh P. G and P. Muthuchidambaranathan are with Department of Electronics and Communication Engineering, National Institute of Technology, Tiruchirappalli, India (e-mail: pgsudheesh@gmail.com; muthuc@nitt.edu).}
\thanks{Mohammad Mozaffari and Walid Saad are with Wireless@VT, Electrical and Computer
	Engineering Department, Virginia Tech, Blacksburg, VA, 24061, USA, (e-mail:
	mmozaff@vt.edu; walids@vt.edu).}
\thanks{Maurizio Magarini is with Dipartimento di Elettronica, Informazione e Bioingegneria Politecnico di Milano, 20133 Milano, Italy (e-mail: maurizio.magarini@polimi.it).} \vspace{-0.5cm}}

	\maketitle
\begin{abstract}
	\textcolor{black}{High altitude platform (HAP) drones can provide
	broadband wireless connectivity to ground users in rural areas  by establishing line-of-sight (LoS) links and exploiting
	effective beamforming techniques.} However, at high altitudes, acquiring the channel state information (CSI) for HAPs, which is a key component to perform beamforming, \textcolor{black}{is challenging.} In this paper, by exploiting an interference alignment (IA) technique, a novel method for achieving the maximum sum-rate in HAP-based communications without CSI is proposed. In particular, to realize IA, a multiple-antenna tethered balloon is used as a relay between multiple HAP drones and ground stations (GSs). Here, a multiple-input multiple-output X network  system is considered. The capacity of the considered $M \times N$ X network with a tethered balloon relay is derived in closed-form. Simulation results corroborate the theoretical findings and show that the proposed approach yields the maximum sum-rate in multiple HAPs-GSs communications in absence of CSI. The results also show the existence of an optimal balloon's altitude for which the sum-rate is maximized.\vspace{-0.2cm}
\end{abstract}

\section{Introduction}\vspace{-0.01cm}
Satellite and terrestrial cellular communications are the two most widely used
communication systems for providing global connectivity to mobile ground users.
While satellites can deliver wireless service to users in remote areas, their spectral efficiency is limited by their large footprints \cite{mohammad}. Meanwhile, terrestrial communication systems cannot
guarantee a reliable service for users in remote, rural areas, due to the lack of infrastructure nodes, such as base stations (BSs). High altitude platform (HAP) drones can substantially extend the coverage of terrestrial networks
by establishing line-of-sight (LoS) links and adjusting their altitude \cite{mozaffaricoverage}, \cite{Letter}.

To exploit the spatial dimension and enhance spectral efficiency,
	HAPs will typically rely on highly directive antennas to communicate with ground stations \cite{liugrace2}.
     \textcolor{black}{In a single HAP system with multiple antennas at the transmitter, a spatial multiplexing gain cannot be typically achieved due to a high correlation between parallel paths \cite{madhow2}.
     However, the deployment of multiple spatially separated HAPs can be a promising solution to exploit spatial multiplexing and  boost spectral efficiency. In particular, by using a large number of antennas at the HAPs,  one can provide a precise beamforming  which is a key requirement for spatial multiplexing.
   To this end, channel state information (CSI) at the transmitter (CSIT)  
  is required \cite{madhow2}, \cite{jafarXlimit}.
   However, in HAP drone systems, acquiring precise CSIT is challenging due to the high altitudes and the movement of the drones. Consequently, in HAPs-to-ground stations (GSs) communications\footnote{An HAP drones-GSs wireless system in which each HAP
   	drone carries a dedicated symbol for each GS can be modeled
   	as an X network. The X network houses all possible channel
   	models such as the interference channel, the Y channel,
   	and the Z channel. Beyond offering a generalized structure, an
   	X network offers a maximum capacity as compared to other
   	channel models such as the interference channel \cite{jafarXlimit}.}  exploiting spatial multiplexing, which can yield a maximum possible sum-rate, is also challenging.}\\    
\indent\textcolor{black}{  One practical approach to achieve a maximum possible sum-rate for HAP drones-to-GSs communications is via the use of interference alignment (IA) schemes  \cite{jafarXlimit,Pantisano,cadambe}.
	In particular, at a high {SNR} regime, which is  a typical case in HAP communications, IA can achieve a maximum sum-rate by restricting
	interference beams to a smaller subspace that does not overlap
	with the desired signal space \cite{jafarXlimit}.}
\textcolor{black}{Unlike terrestrial wireless systems in which the BSs' positions are fixed,
    acquiring CSIT to implement IA in HAPs-GSs communication systems is challenging due to imperfect HAP drone stabilization.
	As a result, the lack of exact CSI at HAP drones
    can yield a significant degradation of the sum-rate performance.}
    Nevertheless, with the use of relays, it is possible to achieve maximum sum-rate
 when CSIT is not available \cite{yener2}.\\
 \indent Unlike the time domain realization of IA,  the performance of IA in the spatial domain is limited due to the restrictions in designing precoding matrices\cite{jafarXlimit}. Therefore, the maximum possible sum-rate can be achieved by implementing relay-assisted IA in the time domain.
	In this case, one can use popular relaying mechanisms 
such as amplify and forward (AF) or decode and forward (DF) \cite{yener2}. 
	 An AF relaying scheme with multiple
    relays achieves an upper bound for the $\mathrm{DoF}$
	of a generalized X network \cite{yener2}.
    Hence, by adopting an AF relaying scheme such as the one in \cite{yener2}, multiple relays can be used to
    achieve maximum possible sum-rate of the system where HAP drones have knowledge of CSI.
    Similar results can be achieved with a DF-based relaying scheme with single relay in
	a two-user X channel \cite{iran}. However, the previous works in \cite{yener2} and \cite{iran} did not investigate the use of a DF relaying mechanism for IA in an HAP drones system.\\ 
\indent\textcolor{black}{The main contribution of this paper is a novel framework for maximizing the sum-rate of a relay-aided HAP drones wireless system when the CSI is not available.
	In particular, to achieve the maximum sum-rate, we propose a
 DF scheme involving $M$ HAP drones, $N$ ground receivers,
  and one relay with $(M-1)\times (N-1)$ antennas.
    In this scenario, we show that it is possible to achieve the maximum possible sum-rate by exploiting the IA scheme.
    Moreover, we derive a closed-form analytical expression for
     the capacity of an $M \times N$ X channel with tethered balloon relay. 
    Simulation results verify our analytical results and show that a significant sum-rate gain can be achieved by using the proposed scheme.}

	\section{System Model}
	Consider a geographical area with $N$
	GSs (or receivers) and a tethered balloon attached to a control station, as shown in Fig.\,\ref{fig:twousereps}.
	This control station provides the power required to operate the tethered balloon.
	Meanwhile, the GSs receive data from $M$ HAPs,
	which are located at altitudes within the range of 17-22\,km  \cite{mohammad}.
	Each HAP and receiver houses $A$ antennas while the tethered balloon
	has $(M-1)\times (N-1)$ antennas. Each GS receives data from
	each HAP, forming an $M \times N$ X network. Unlike \cite{liugrace2}, which uses a frequency
    duplexing technique to avoid the interference in HAP communications,
	our proposed model uses a relay that operates in half duplex mode in the same frequency band.	\vspace{-0.1cm}

\subsection{Channel Model}
	For terrestrial communications, the channel is typically modeled as
	Rayleigh in urban areas and Rician in suburban scenarios.
	However, in air-to-ground communications, the channel has different characteristics \cite{zajic}, \cite{LetterOT}.
	In urban environments, the air-to-ground channel experiences Rician
	fading due to the presence of LoS links.
    In suburban areas, a Rayleigh fading is experienced
	due to the presence of reflected signals which are stronger than LoS signals \cite{zajic}.

	Here, we adopt a Rician channel model
	in which both LoS and non-LoS (NLoS) paths are considered.
	Therefore, the channel gain matrix can be represented as \cite{HAPcapacity}:
	\begin{equation}\label{Hmat}
		\boldsymbol{{H}}= \sqrt{\frac{\kappa}{1+\kappa}}\boldsymbol{{H}}_\textrm{LoS} + \sqrt{\frac{1}{1+\kappa}}\boldsymbol{{H}}_\textrm{NLoS},
	\end{equation}
	where $\boldsymbol{{H}}_\textrm{LoS}$ and $\boldsymbol{{H}}_\textrm{NLoS}$
 represent, respectively, the channel matrices for LoS and non-LoS communication.
	The Rician factor $\kappa$
	is given by,
		$\kappa=\frac{\sigma^{2}_\textrm{LoS}}{\sigma^{2}_\textrm{NLoS}}$ \cite{matlabbook},
		where $\sigma^{2}_\textrm{LoS}$ and $\sigma^{2}_\textrm{NLoS}$ are the power
	of LoS path and NLoS path, respectively.
	For our model, we consider $A$ antennas at
    each HAP and GS. The role of HAP and GS as transmitter or receiver
    can be reversed, using the reciprocity property. The static MIMO channel, excluding the path loss, is given by \cite{matlabbook}:\vspace{-0.2cm}

\begin{equation}
	\hspace{-.15cm}	\boldsymbol{{\bar{H}}}_\textrm{LoS} \hspace{-.1cm}= \hspace{-.1cm} \hspace{-.1cm} \begin{bmatrix}
			1 \\
			e^{j2\pi\frac{d_{R}}{\lambda}}\hspace{-.05cm}\sin({\theta_A}) \\
			\vdots \\
			e^{j2\pi \frac{d_R}{\lambda}(M-1)}\hspace{-.05cm}\sin(\theta_A)
		\end{bmatrix}\hspace{-.03cm}\cdot\hspace{-.03cm}
		\begin{bmatrix}
			1 \\
			e^{j2\pi\frac{d_{T}}{\lambda}}\hspace{-.05cm}\sin(\theta_D) \\
			\vdots \\
			e^{{j2\pi \frac{d_T}{\lambda}}(N-1)}\sin(\theta_D)
		\end{bmatrix}^{T}\hspace{-.3cm},
	\end{equation}
\noindent \textcolor{black}{where $d_{R}$ and $d_{T}$ are the
	antenna spacing at the  receiver and transmitter ($L \gg d_{R}, d_{T}$)}, and $\lambda$ is the wavelength.
    Also, $\theta_A$ and $\theta_D$
	represent, respectively, the angle-of-arrival at the receiver and angle-of-departure at the transmitters. We also note that the NLoS MIMO channel	follows a Rayleigh distribution.\vspace{-0.00cm}
		\begin{figure}[!t]
			\begin{center}
				\vspace{-0.1cm}
				\includegraphics[width=6.9cm]{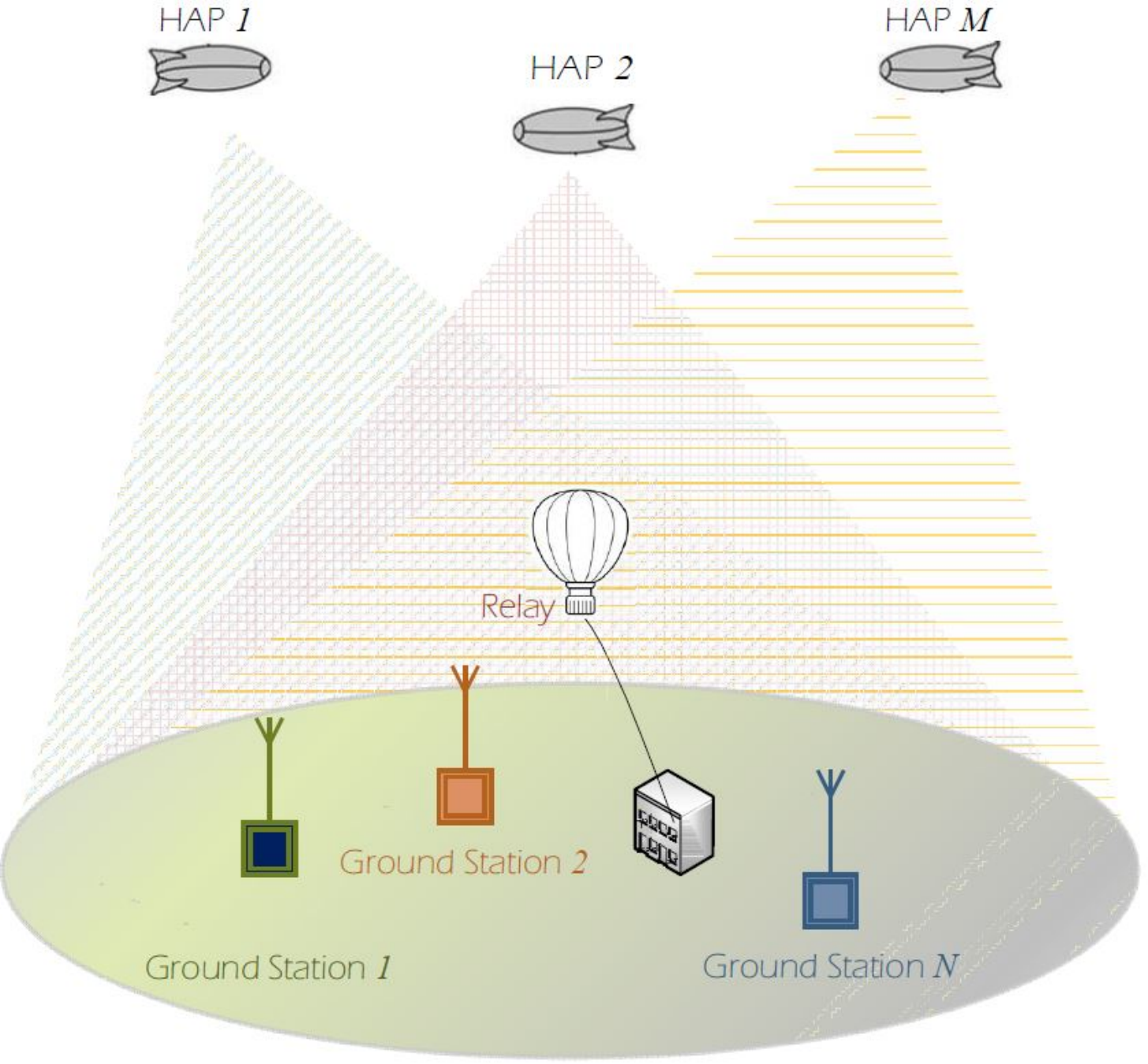}
				\vspace{-0.01cm}
				\caption{System model. \vspace{-.45cm}}
				\label{fig:twousereps}
			\end{center}
		\end{figure}
	\section{DF relay aided Interference Alignment for
		systems without CSIT}
    Knowledge of CSIT is a key requirement for designing a precoder at the HAPs.
    However, at high altitudes acquiring a precise CSIT is challenging due to the
    difficulty in stabilizing the aerial platform that is affected by wind and other natural factors.
    In such a case, to achieve IA, a feasible solution is to
    accommodate a tethered balloon relay to perform DF operation. A DF scheme will require at least $M$ antennas
    to decode $M$ transmitted symbols.
    However, the DF relay must accommodate $(M-1) \times (N-1)$ antennas to perform IA.
    By exploiting temporal domain characteristics, we can transmit
	$MN$ number of symbols  in $M+N-1$ time slots for
    transmitter and receiver. \vspace{-0.20cm}
    \subsection{Feasibility of DF in tethered balloon relay}
    For IA, the tethered balloon must have $(M-1) \times (N-1)$ antennas.
    To tightly pack these antennas in a tethered balloon relay,
    the HAPs must be placed sufficiently apart 
    so that the links from HAPs
    to GSs become uncorrelated.
     We deduce the minimal separation required between HAPs,
     when they are located at 18 km above earth, operating at 48 GHz \cite{mohammad}.
     From the concept of LoS MIMO \cite{madhow2}, we have:
     \vspace{0.02cm}
    \begin{equation}\label{madhow}
     d_\textrm{HAP}d_\textrm{GS}=\frac{L \lambda}{\beta},
    \end{equation}
   \noindent where $L$ is the distance between an HAP and the GS,
    $\beta$ represents the degrees of freedom.
    Also, \textcolor{black}{$d_\textrm{HAP}$ and $d_\textrm{GS}$, respectively, represent the inter-HAP and inter-GS distances}.
    To determine the HAP separation distance, we choose the following parameters:
    $d_\textrm{GS}= 0.1$\,m, $L= 18$\,km, $\lambda= 0.00625$\,m, and $\beta=1$. Now, by using (\ref{madhow}),
    we find that the HAPs must be spaced 1125\,m apart to ensure that the channels will be uncorrelated. In this case, each HAP drone-GS wireless communication channel will be full rank which allows sending data over multiple paths.

	The single tethered balloon relay performs the DF operation
	and operates in half-duplex mode as shown in \cite{yener2}.
    The communication between HAPs and GSs is carried out in two phases,
	direct transmission and relay aided transmission.
	In the first phase, GSs receive signal directly
	from the HAP drones while the relay remain silent.
	In this case, HAP drones transmit data to
	one specific GS during one time slot.
   In the second phase, tethered balloon relay is active and transmits data to the GS after precoding.
	Therefore, the GS receives signals from the relay and the first transmitter. \vspace{-0.2cm}

\subsection{Capacity of Rician X network}
The asymptotic sum-rate of a network as a function of the signal-to-noise-ratio ({SNR})
can be expressed as \cite{jafarXlimit,cadambe}:
\begin{equation}\label{capeq}
C=\beta\cdot \log(\gamma)+\mathcal{O}(\log(\gamma)),
\end{equation}
where $\gamma$ is the SNR value at a given receiver.

The $\mathrm{DoF}$ for an X network with $M$ transmitters and $N$ receivers
each with $A$ antennas, is equal to $\beta=\frac{MNA}{M+N-1}$.
However, we consider a relay-aided system that uses a DF relaying mechanism.
\begin{MyColorPar}{black}
We denote the channel matrices between the relay and HAP $i$ by $\boldsymbol{H}_{i}$, and between GS $j$ and the relay by $\boldsymbol{G}_{j}$.
Since the Rician factor of the HAP-to-relay link is greater than
that of the relay-to-GS link, we model $\boldsymbol{H}_{i}$ and
$\boldsymbol{G}_{j}$ with different $\kappa$ values. In this case, the small-scale fading matrices $\boldsymbol{\bar{H}}_{i}$ and $\boldsymbol{\bar{G}}_{j}$ are
obtained by replacing $\kappa$ in (1) by $\kappa_{i}^{u}$ and $\kappa_{j}^{l}$,
respectively.
After adding the path loss components to $\boldsymbol{\bar{H}}_{i}$ and $\boldsymbol{\bar{G}}_{j}$, we get $\boldsymbol{H}_{i}=\frac{\alpha_{i}}{(d_{Ri})^{2}}\boldsymbol{\bar{H}}_{i}$
and $\boldsymbol{G}_{j}=\frac{\psi_{j}}{(d_{jR})^{2}}\boldsymbol{\bar{G}}_{j}$,
where $\alpha_{i}$ and $\psi_{j}$
represent, respectively, the channel gains in $\boldsymbol{H}_{i}$ and $\boldsymbol{G}_{j}$
 at a 1\,m reference distance. Also, $d_{Ri}$ and $d_{jR}$ are the
link distances in HAP $i$-tethered balloon and tethered balloon-GS $j$ communications.
In order to find  capacity of the considered relay aided HAPs-GSs system,
we first calculate the {SNR} of each stream.
Here, we have two phases: HAPs-tethered balloon communications, and tethered balloon-GSs communications.
We begin with the first phase, and the same steps can be used to
find the sum-rate of the second phase.
The zero-forcing detection (ZF) SNR of the $k^\textrm{th}$ parallel channel, $\mathrm{\gamma}_{k}$, is given by \cite{HAPcapacity}:\vspace{-0.05cm}
\begin{equation}\label{snr1}
\mathrm{\gamma}_{k}=\frac{\Gamma}{[\boldsymbol{{W}}^{-1}]_{k,k}},
\end{equation}
where $\Gamma$ is the transmit power per symbol, and
$\boldsymbol{{W}}=\boldsymbol{{H}}^{H}\boldsymbol{{H}}$.
Now, given the knowledge of ZF SNR, we derive the explicit expression
for ZF-capacity as follows.
\vspace{-0.02cm}
\begin{theorem}
	\label{theorem1}\normalfont
The capacity of a relay aided multiple HAPs-GSs communication is given by:
\begin{align}\label{capprop}
C=\frac{MN}{M+N-1}\textrm{min}\Big(&\sum\limits_{i=1}^{M}\log_{2}
\big(1+\Gamma_{s}\boldsymbol{h}_{i1}^{H}\boldsymbol{Q}_{i}\boldsymbol{h}_{i1}\big), \nonumber \\
&\sum\limits_{j=1}^{N}\log_{2}
\big(1+\Gamma_{s}\boldsymbol{g}_{j1}^{H}\boldsymbol{W}_{j}\boldsymbol{g}_{j1}\big) \Big),
\end{align}
\vspace{-0.1cm}
where $\boldsymbol{Q}_{i}$ and $\boldsymbol{W}_{j}$ are given by: \vspace{0.08cm}
\begin{align}
   \boldsymbol{Q}_{i}&=  [\boldsymbol{I}-\boldsymbol{\tilde{H}}_{i}(\boldsymbol{\tilde{H}}_{i}^{H}
   \boldsymbol{\tilde{H}}_{i})^{-1}\boldsymbol{\tilde{H}}_{i}^{H}],\\
   \boldsymbol{W}_{j}&=  [\boldsymbol{I}-\boldsymbol{\tilde{G}}_{j}(\boldsymbol{\tilde{G}}_{j}^{H} \boldsymbol{\tilde{G}}_{j} )^{-1}\boldsymbol{\tilde{G}}_{j}^{H}],
\end{align}
with $\boldsymbol{\tilde{H}}_{i}$, $\boldsymbol{h}_{i1}$ and $\boldsymbol{\tilde{G}}_{j}$, $\boldsymbol{g}_{j1}$
 obtained from channel matrices $\boldsymbol{H}_{i}$  and $\boldsymbol{G}_{j}$ based on $\boldsymbol{H}_{i}=[\boldsymbol{h}_{i1} \boldsymbol{\tilde{H}}_{i}]$ and
$\boldsymbol{G}_{j}=[\boldsymbol{g}_{j1} \boldsymbol{\tilde{G}}_{j}]$.
\end{theorem}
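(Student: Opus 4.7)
The plan is to build the capacity expression by combining three ingredients: the half‑duplex DF bottleneck, the temporal pre‑log factor $MN/(M+N-1)$ coming from the IA scheme, and a ZF per‑stream SNR rewritten via a projection identity. First I would invoke the DF relaying principle: since the tethered balloon must decode before forwarding and operates in half‑duplex, the end‑to‑end instantaneous rate is limited by the weaker of the two hops, so capacity reduces to $\min(R_{\text{HAP}\to R},R_{R\to\text{GS}})$ where each hop is a ZF‑detected MIMO multiple access channel. Then the pre‑factor $MN/(M+N-1)$ is attached because the IA scheme packs $MN$ data symbols into $M+N-1$ time slots, as described in the DoF discussion preceding the theorem.

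Second, I would treat each hop separately and derive its sum‑rate from the ZF SNR expression (\ref{snr1}). For the first hop with HAP~$i$, writing $\boldsymbol{H}_{i}=[\boldsymbol{h}_{i1}\ \boldsymbol{\tilde{H}}_{i}]$, I would form $\boldsymbol{W}=\boldsymbol{H}_{i}^{H}\boldsymbol{H}_{i}$ as a $2\times 2$ block matrix and apply the standard block‑matrix inversion (Schur‑complement) identity to obtain
\begin{equation*}
\bigl[\boldsymbol{W}^{-1}\bigr]_{1,1}=\bigl(\boldsymbol{h}_{i1}^{H}\boldsymbol{h}_{i1}-\boldsymbol{h}_{i1}^{H}\boldsymbol{\tilde{H}}_{i}(\boldsymbol{\tilde{H}}_{i}^{H}\boldsymbol{\tilde{H}}_{i})^{-1}\boldsymbol{\tilde{H}}_{i}^{H}\boldsymbol{h}_{i1}\bigr)^{-1}=\bigl(\boldsymbol{h}_{i1}^{H}\boldsymbol{Q}_{i}\boldsymbol{h}_{i1}\bigr)^{-1},
\end{equation*}
so that by (\ref{snr1}) the ZF SNR of the stream dedicated to HAP~$i$ becomes $\Gamma_{s}\,\boldsymbol{h}_{i1}^{H}\boldsymbol{Q}_{i}\boldsymbol{h}_{i1}$, with $\boldsymbol{Q}_{i}$ precisely the orthogonal projector onto the null space of $\boldsymbol{\tilde{H}}_{i}^{H}$. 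Summing $\log_{2}(1+\text{SNR})$ over $i=1,\dots,M$ gives the first argument of the $\min$; an entirely symmetric computation for $\boldsymbol{G}_{j}=[\boldsymbol{g}_{j1}\ \boldsymbol{\tilde{G}}_{j}]$ yields $\boldsymbol{W}_{j}$ and the second argument.

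Third, I would combine the two parts: the DF bottleneck forces the $\min$, the temporal IA factor contributes $MN/(M+N-1)$, and the per‑stream ZF sum‑rates fill in the two arguments, producing exactly (\ref{capprop}). The underlying asymptotic form (\ref{capeq}) with $\beta=MN/(M+N-1)$ is consistent with this, serving as a sanity check at high SNR.

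The main obstacle I anticipate is the justification of the projection identity in a form that applies cleanly to both hops with their different Rician factors $\kappa_{i}^{u}$ and $\kappa_{j}^{l}$ and different path‑loss scalings $\alpha_{i}/d_{Ri}^{2}$ and $\psi_{j}/d_{jR}^{2}$; these scalars get absorbed into $\boldsymbol{h}_{i1}$, $\boldsymbol{\tilde{H}}_{i}$ (and the GS counterparts) and carry through the Schur complement without changing its algebraic form, so the derivation is uniform in the channel statistics. A secondary subtlety is bookkeeping for the pre‑factor: one must verify that identifying $\Gamma_{s}$ as the per‑symbol transmit power in both phases is consistent with the half‑duplex time‑sharing already reflected in the $MN/(M+N-1)$ factor, so that no double counting of the bandwidth loss occurs.
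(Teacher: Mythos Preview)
Your proposal is correct and follows essentially the same route as the paper: both derive the per-stream ZF SNR by expressing $[(\boldsymbol{H}^{H}\boldsymbol{H})^{-1}]_{1,1}$ via the Schur-complement/projection identity, apply it symmetrically to the two hops, and then assemble the result using the DF bottleneck $\min(\cdot,\cdot)$ together with the IA pre-log $\beta=MN/(M+N-1)$. Your version is in fact slightly cleaner, since you state the reciprocal form $[\boldsymbol{W}^{-1}]_{1,1}=(\boldsymbol{h}_{i1}^{H}\boldsymbol{Q}_{i}\boldsymbol{h}_{i1})^{-1}$ correctly, whereas the paper's intermediate equation~(\ref{winv11}) omits the inversion (though its subsequent SNR expression~(\ref{SNRtheory}) is consistent with your formula).
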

\begin{proof}
In the proposed model, we have two phases: HAPs-tethered balloon and tethered balloon-GSs communications. We first
find the SNR of HAP $i$-to-relay link, $\mathrm{\gamma}_{i}$, and
relay-to-GS $j$, $\mathrm{\gamma}_{j}$. Then, we proceed to calculate the sum-rate.
In general, the SNR of the first stream at the receiver is given by: \vspace{-0.2cm}
\begin{equation}\label{SNR1}
\mathrm{\gamma}_{1}=\frac{   \frac{E_{s}}{\sigma^{2}N_{T}}} {\big[(\boldsymbol{{H}}^{H}\boldsymbol{{H}})^{-1}\big]_{1,1}},
\end{equation}
where $\frac{E_{s}}{N_{T}}$ is the transmitted energy per symbol, and $\sigma^{2}$ is the noise power.
From \cite{matrixinv}, we know that $[\boldsymbol{{W}}^{-1}]_{1,1}$ can be found from the elements of
$\boldsymbol{{H}}$. That is,
\begin{equation}\label{winv11}
{\big[(\boldsymbol{{H}}^{H}\boldsymbol{{H}})^{-1}\big]_{1,1}}=
\boldsymbol{h}_{1}^{H}[\boldsymbol{I}-\boldsymbol{\tilde{H}}(\boldsymbol{\tilde{H}}^{H}
   \boldsymbol{\tilde{H}})^{-1}\boldsymbol{\tilde{H}}^{H}]\boldsymbol{h}_{1},
\end{equation}
where, $\boldsymbol{H}=[\boldsymbol{h}_{1} \boldsymbol{\tilde{H}}]$.
Now, substituting (\ref{winv11}) in (\ref{SNR1}) leads to:
\begin{equation}\label{SNRtheory}
\mathrm{\gamma}_{1}=  \frac{E_{s}}{\sigma^{2}N_{T}}
\boldsymbol{h}_{1}^{H}[\boldsymbol{I}-\boldsymbol{\tilde{H}}(\boldsymbol{\tilde{H}}^{H}
   \boldsymbol{\tilde{H}})^{-1}\boldsymbol{\tilde{H}}^{H}]\boldsymbol{h}_{1}.
\end{equation}
 By substituting $\boldsymbol{H}_{i}$ for  $\boldsymbol{H}$
in (\ref{SNRtheory}), we get $\mathrm{\phi}_{i}$:
\begin{equation}\label{SNRtheory2}
\mathrm{\phi}_{i}=  \frac{{E}_\textrm{HAP}}{\sigma^{2}N_{T}}
\boldsymbol{h}_{i1}^{H}[\boldsymbol{I}-\boldsymbol{\tilde{H}}_{i}(\boldsymbol{\tilde{H}}_{i}^{H}
   \boldsymbol{\tilde{H}}_{i})^{-1}\boldsymbol{\tilde{H}}_{i}^{H}]\boldsymbol{h}_{i1},
   \end{equation}
   where $\mathds{E}_\textrm{HAP}$ is the transmitted power of each HAP.
Similarly, the SNR in GS $j$ is:
\begin{equation}\label{SNRtheory3}
\mathrm{\varphi}_{j}= \frac{E_{\textrm{BL}}}{\sigma^{2}N_{T}}
\boldsymbol{g}_{j1}^{H}[\boldsymbol{I}-\boldsymbol{\tilde{G}}_{j}(\boldsymbol{\tilde{G}}_{j}^{H}
   \boldsymbol{\tilde{G}}_{j})^{-1}\boldsymbol{\tilde{G}}_{j}^{H}]\boldsymbol{g}_{j1},
   \end{equation}
   where $E_{\textrm{BL}}$ is the transmitted power of tethered balloon, associated with $\boldsymbol{G}_{j}$.
The overall system capacity can be defined as in \cite{afdf}:
\begin{equation}\label{DFcapacity}
C_{DF}(\gamma)=\beta \cdot \textrm{min}({C_{1}(\gamma),C_{2}(\gamma)}),
\end{equation}
where $C_{1}=\sum_{i=1}^{M‎}\log(1+\phi_{i})$ is the sum-rate between the HAPs-tethered balloon link and
$C_{2}=\sum_{j=1}^{N}\log(1+\varphi_{j})$ is the sum-rate between the tethered balloon-GSs link.
Finally, by substituting    (\ref{SNRtheory2})
 and (\ref{SNRtheory3}) in (\ref{DFcapacity}) this theorem is proved.
\vspace{-0.1cm}
\end{proof}
\vspace{-0.01cm}
Using Theorem\,\ref{theorem1}, it is possible to analyze
the impact of $\kappa_{i}^{u}$, $\kappa_{j}^{l}$, $d_{Ri}$, and $M$ on the capacity of the Rician X-channel in the HAP drones wireless system.
\begin{remark}
	At high $\textnormal{SNR}$, the system fails to achieve a higher sum-rate.
	This is because, at larger $\kappa_{i}^{u}$ and $\kappa_{j}^{l}$, the columns of $\boldsymbol{{H}}$ will be correlated.
	Hence, IA will fail to achieve a maximum capacity for higher $\kappa_{i}^{u}$ and $\kappa_{j}^{l}$ values. \vspace{-0.1cm}
\end{remark}
\begin{remark}
	When $M=N$ and $\kappa_{i}^{u}$=$\kappa_{j}^{l}$, the capacity of the system is maximum
	when $\frac{E_{\textrm{HAP}}\times(d_{Ri})^{2}}{\alpha_{i}}=\frac{E_{\textrm{BL}}\times(d_{jR})^{2}}{\psi_{j}}$.
	Also, when the transmitted power $E_{\textrm{HAP}}=E_{\textrm{BL}}$, the maximum capacity is obtained
	when the tethered balloon is at the center of the HAPs-GSs link.
\end{remark}

 \vspace{-0.1cm}

\section{Numerical Results}
   Here, we evaluate the sum-rate performance of the relay aided multiple HAPs-to-GSs communications. HAPs are located at an altitude of 18\,km above earth
   and each GS receives signals from all HAPs. 
    We consider an
    X channel, consisting of $M= 2$ and $3$ HAPs, a tethered balloon, and 3 ground stations.
\begin{figure}[!t]
    \centering

    \includegraphics[width=0.98\linewidth]{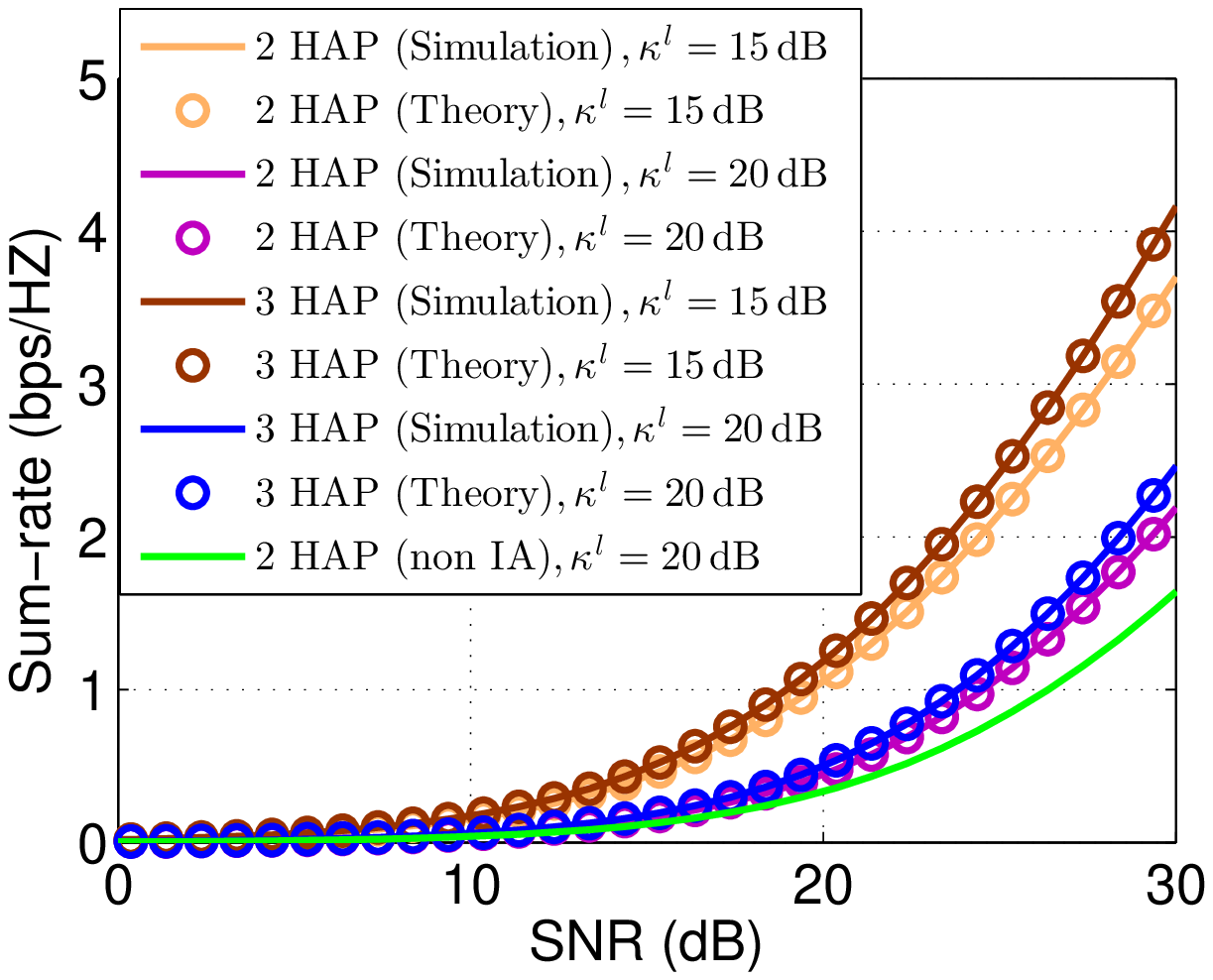}
    \caption{Sum-rate vs. SNR.}
   \label{fig:HAPeffect}
\vspace{0.4cm}
     \includegraphics[width=0.96\linewidth]{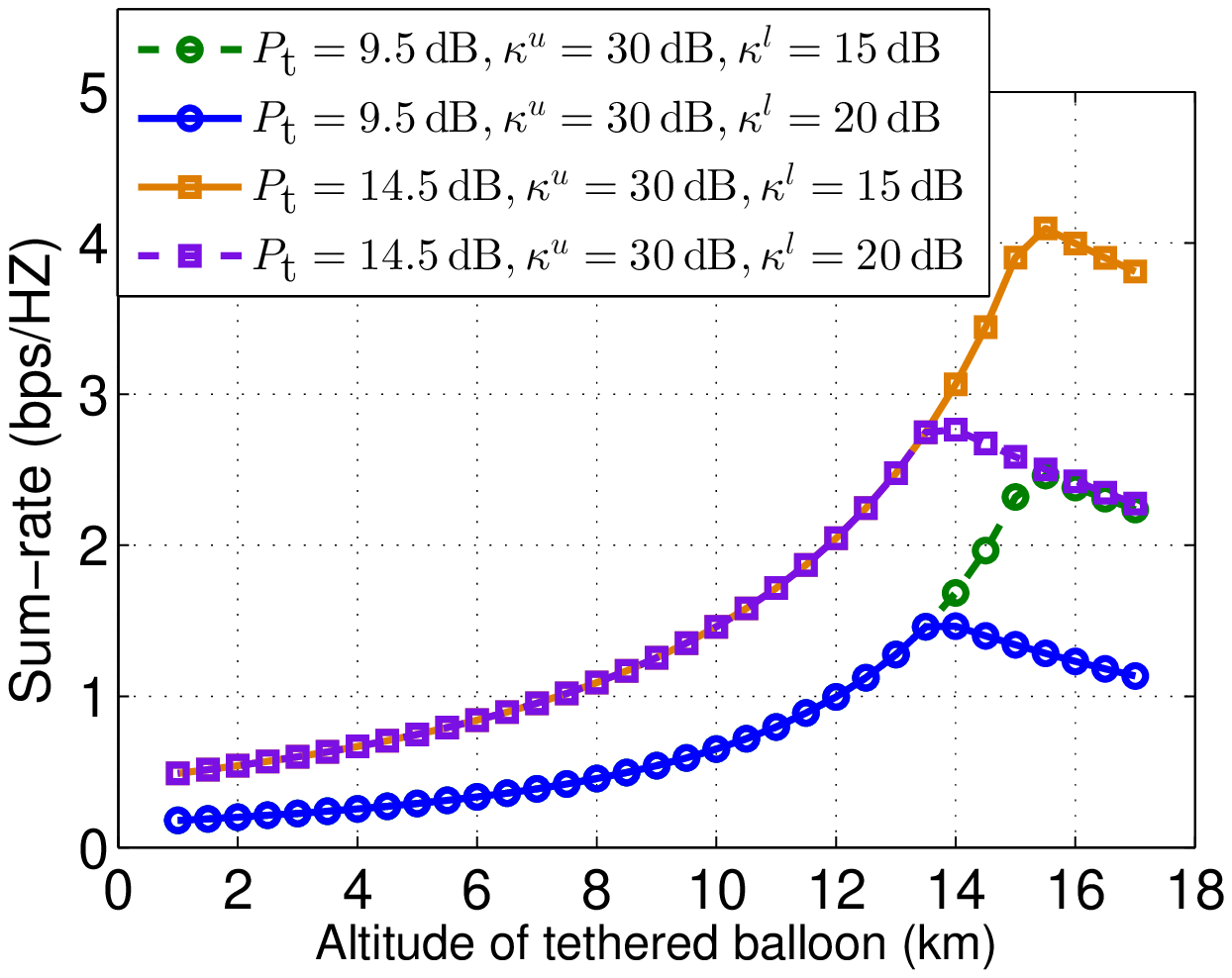}
    \caption{Sum-rate vs. tethered balloon's altitude. }\vspace{-0.2cm}
   \label{fig:capacity}
    \end{figure}
Fig.\,\ref{fig:HAPeffect} shows the sum-rate of the Rician channel as a function SNR for different Rician factors and number of HAPs.
    As we can see from this figure, the sum-rate obtained using simulations (the curves shown without a circle marker) is in agreement with the analytical derivation.
    Without the use of a tethered balloon, the sum-rate of the considered system decreases as the maximum DoF cannot be achieved. 
   While exploiting IA, however, the system is equivalent
    to a system in which CSI is known at the transmitters and, hence, a higher sum-rate is achieved.
    In this analysis, we consider $\kappa^{u}=30$\,dB and $d_{SR}=1$ km
    and we vary $\kappa^{l}$ and $M$.
    As expected, the sum-rate increases by using the IA solution
    and decreasing the Rician factor, $\kappa^{l}$.
    For instance, at $\textrm{SNR}=25$\,dB, the channel with $\kappa^{l}= 20$\,dB
    offers a sum-rate gain of 34.8\%. Interestingly, under the same {SNR}, the sum-rate increases by up to 1.7 times when
    $\kappa^{l}=15$\,dB.
    From Fig.\,\ref{fig:HAPeffect}, we can also see that, as $\kappa$ increases,
    the sum-rate degradation occurs at a higher {SNR}.
    This is due to the fact that increasing $\kappa$ increases the correlation between channels,
    which results in a degradation of the asymptotic performance.

    From Fig.\,\ref{fig:capacity}, we can see that sum-rate
    increases as the tethered balloon moves away from the HAPs.
    Similarly, the sum-rate decreases with $\kappa^{l}$, provided that the HAPs are located sufficiently apart.
     This is due to the fact that exploiting IA while deploying a tethered balloon relay allows achieving a
     maximum DoF in the system.
     Fig.\,\ref{fig:capacity} shows that the optimal altitude of
     tethered balloon at which the HAPs-tethered balloon-GSs link
     has the maximum sum-rate ($d_{RD}^\textrm{opt}$) does not change by varying the transmit power.
     As we can see from this figure, given $d_{SD}=18$\,km,
     a channel with $P_{t}=14.5$ dB, $\kappa^{u}=30$ dB and $\kappa^{l}=15$ dB
     offers maximum sum-rate, when $d_{RD}^\textrm{opt}=15.5$ km.
     Also, when $\kappa^{u}=30$ dB and $\kappa^{l}=20$ dB, the optimal relay's altitude is
     14 km. In fact, as the difference between $\kappa^{u}$ and $\kappa^{l}$
     become smaller, $d_{RD}^\textrm{opt}$ converges towards $\frac{d_{SD}}{2}$. \vspace{0.02cm}
\end{MyColorPar}
	\section{Conclusion}\vspace{-0.00cm}
	
	In this paper, we have proposed an effective interference alignment scheme for maximizing
    the sum-rate of HAPs-ground stations communications assisted by a tethered balloon
    relay. In particular, 
    we have considered the half-duplex relaying scheme using a tethered balloon relay to achieve maximum DoF in HAPs-ground stations communications,
    when the HAPs lack the knowledge of CSI.
Our results have shown that, using a tethered balloon relay for exploiting IA provides a significant sum-rate gain in the HAP-based wireless system that uses multiple interfering drones. \vspace{0.2cm}

	\def\baselinestretch{1.04}
	\bibliographystyle{IEEEtran}
\vspace{-0.3cm}
		\bibliography{references}
	\vspace{-0.3cm}
\end{document}